\newtheorem{theorem}{Theorem}
\newcounter{example}
\newenvironment{example}%
{\begin{trivlist}\refstepcounter{example}%
\item[]{\em Example {\em \theexample}}%
\nopagebreak[4]}%
{\mbox{}\hfill\QEDclosed
\end{trivlist}}
\renewcommand{\theexample}{\arabic{example}}
\newcommand{\txt}{\textstyle}
\newcommand{\wh}{\widehat}
\newcommand{\ulp}{{\rm ulp}}
\newcommand{\specialdescriptionlabel}[1]{\bf #1\hfil}
\begin{document}

\title{Correction to the 2005 paper:\\"Digit Selection for SRT Division
and Square Root''}


\author{Peter Kornerup\\
Dept.~of Mathematics and Computer Scienc\\
University of Southern Denmark, Odense, Denmark\\
{\em E-mail: kornerup@imada.sdu.dk}}%


\maketitle

\begin{abstract} 
  It has been pointed out by counterexamples in a 2013 paper in the IEEE
  Transactions on Computers \cite{Rus13}, that there is an error in the
  previously ibid.\ in 2005 published paper \cite{Kor05a} on the
  construction of valid digit selection tables for SRT type division and
  square root algorithms. The error has been corrected, and new results
  found on selection constants for maximally redundant digit sets.
\end{abstract}

\begin{keywords}
Digit~selection, SRT, division, square root
\end{keywords}~\\[-6ex]

\section{Introduction}
In a recent paper \cite{Rus13}, David M. Russinoff expressed criticism on
the determination of digit selection parameters in SRT division algorithms, presented in the paper \cite{Kor05a} by this author. An
error in the selection was pointed out by counterexamples.

The SRT algorithms for division and square root are based on selecting the
digits of the result by a table look-up or equivalent, using a few leading
bits of the divisor (or root approximation) and of the partial remainder.

To determine the minimal number of bits necessary for a valid table to
exist, traditionally searches were performed to assure that the next
quotient digit can be chosen as valid for all points (remainder, divisor)
in a set defined by the truncated remainder and divisor, i.e., a specific
“uncertainty rectangle.”

It was the purpose of \cite{Kor05a} as an alternative to present a more
analytical approach to determine these parameters, based directly on the
radix and digit set of the quotient/root representation. Below is a brief
account of the core of this approach, but with some additional
considerations on these parameters, followed by an analysis of the error
in \cite{Kor05a}, including a correction for its parameter determination
and simplifications when the digit set is maximally redundant. 
Finally some conclusions are presented.~\\[-4ex]

\section{Parameter determination in \cite{Kor05a}}

Let $\beta$ be the radix (assumed to be a power of 2) and $\{-a,..,a\}$
be the digit set of the quotient, where $\beta/2\le a \le \beta-1$ and
$\rho=\frac{a}{\beta-1}$ is the redundancy index. Let $t$ and $u$ to be
determined be respectively the number of leading fractional digits of the
partial remainder and of the divisor $y$, assumed normalized $1/2\le y
<1$. The digit selection is to be based on a table look-up (or equivalent)
essentially indexed by $t$ and $u$, the table size then being exponential
in $u+t$. Hence we seek $t$ and $u$ such that $u+t$ is minimal, normally
obtained by minimizing $u$, which in \cite{OF98} by synthesis studies has
been confirmed also to generally minimize the delay and area. 

An analysis on the positioning of the ``uncertainty rectangles'' leads to
the following condition (Equation (12) from \cite{Kor05a}) for a valid
digit selection table to exist:
\begin{eqnarray}\nonumber
\lefteqn{\left\lceil 2^{t-u}(d-\rho)k+2^{t-u}(d-\rho)+1\right\rceil}\\
\label{sel-cond}
 & \le \left\lfloor2^{t-u}(d-\rho)k+2^{t-u}(2\rho-1)k\right\rfloor,
\end{eqnarray}
which has to be satisfied for all $k$, $2^{u-1}\le k < 2^u$ and digits
$d>0$ (which can be assumed by symmetry). Note that the inner leftmost
terms in the two sides of the inequality are identical, and thus the
condition essentially depends on the rightmost terms. Also note that the
significance of the terms is increased by maximizing the difference $t-u$.

It is then seen that (\ref{sel-cond}) is satisfied if the following:
\[
2^{t-u}\left((2\rho-1)k - (d-\rho)\right) \ge 2,
\]
holds for the minimal value $k=2^{u-1}$ and the maximal value $d=a$, and
thus also for all $d < a$ and $k > 2^{u-1}$. 
This translates into the condition:
\begin{equation}\label{strong}\txt
2^{-t}\le \left(\left(\rho-\frac{1}{2}\right)-(a-\rho)2^{-u}\right)/2
\end{equation}
which may be used to find values of $u$ and $t$. However, there is a chance
that the inequality (\ref{sel-cond}) can be satisfied even if the weaker
condition:
\begin{equation}\label{weak}\txt
2^{-t}\le \left(\left(\rho-\frac{1}{2}\right)-(a-\rho)2^{-u}\right)
\end{equation}
similarly is satisfied. 

For any of these conditions to hold, it is obviously necessary that $u$ is
chosen such that:
\begin{equation}\label{eq-u-cond}
2^{-u}<\frac{\rho-\frac{1}{2}}{a-\rho},
\end{equation}
provided that $a>\rho$, or $\beta>2$, since $\beta=2$ is the only case
where $\rho=a(=1)$, a case which can be handled separately.

Given any value of $u$ satisfying (\ref{eq-u-cond}), a possible value
$t=t_0$ can then  be determined say from (\ref{weak}) as:
\begin{equation}\label{t-expr}\txt
t_0 = 
\left\lceil\left(-\log_2(\rho-\frac12-(a-\rho)2^{-u}\right)\right\rceil,
\end{equation}
however, it may be necessary to apply the stronger condition
(\ref{strong}), in which case $t=t_0+1$. To decide between these two
situations the difference between the righthand and lefthand expressions in
(\ref{sel-cond}) may be checked for given specific values of $u$ and $t_0$.

Note that $u$ by (\ref{eq-u-cond}) can be chosen arbitrarily large, and
that $t_0$ by (\ref{t-expr}) decreases when $u \rightarrow \infty$, e.g.,
for maximally redundant digit sets ($\rho=1$), $t_0\rightarrow 2$. Hence
the factor $2^{t-u}$ in (\ref{sel-cond}) can be made arbitrarily small.
However, we want $u+t$ to be small to minimize the table.~\\[-4ex]

\section{The error and its correction}

Russinoff in \cite{Rus13} points out by counterexamples for large values of
the radix and $u$, that the test in \cite{Kor05a}, fails in some cases to
correctly identify whether to use $t=t_0$ or $t=t_0+1$.

The test is based on checking whether the difference $\Delta(t,u,d,\rho,k)$
between the expressions in (\ref{sel-cond}) is non-negative for $2^{u-1}\le
k <2^u$, together with the (false!) observation that it is sufficient to
perform it only for the digit value $d=a$, to assure that the inequality
holds for all values of $d>0$. If tested for all values of $d$ it is
equivalent to a check on the correct positioning of the ``uncertainty
rectangles'' between some slanted lines. The counterexamples were found for
radix 16 and 32 with $u=9$, respectively $u=11$ and $t_0=2$, which
erroneously had accept for $d=a$ but failed for $d=a-1$. Note that in both
cases the value of $2^{t-u}$ is very small.

Let $\delta_{kd}=2^{t_0-u}\left((2\rho-1)k-(d-\rho)\right)-1$ be the
difference between the internal expressions in (\ref{sel-cond}). The test
according to Theorem~3 of \cite{Kor05a} fails for certain extreme
combinations of $u$ and $t_0$ ($u \gg t_0)$, since the determination of $t$
from (\ref{t-expr}) does not assure that $\delta_{kd}\ge1$. When
$\delta_{kd}<1$ the above observation on the sufficiency of the test for
$d=a$ does not hold. Note that for $u \gg t_0$, $\delta_{kd}$ grows only
slowly with~$k$.  In the two counterexamples with $t_0=2$ it is found that:
\[\arraycolsep 3pt
\begin{array}{ccccl}
\beta=16 & u=9  & d=a=15 & k=2^{u-1} & \delta_{kd}=0.890625\\
\beta=32 & u=11 & d=a=31 & k=2^{u-1} & \delta_{kd}=0.94140625.
\end{array}
\]

If $0<\delta_{kd}<1$ and the two internal expressions in (\ref{sel-cond})
considered as an interval happens to include an integer value $x$ for some
value of $k$, then $\Delta(t,u,d,\rho,k)=0$. But for each increment of $k$,
the left endpoint of the interval will be shifted to the right by an amount
$2^{t_0-u}(d-\rho)$ and the width increased by $2^{t_0-u}(2\rho-1)$.
Eventually, if still $\delta_{kd}<1$, it may fall in the open interval
between $x$ and $x+1$, then $\Delta(t,u,d,\rho,k)=-1$ and the test
fails. If this happens for $d=a$ then Theorem~3 in \cite{Kor05a} specifies
that $t=t_0+1$ should be used.

If a smaller value of $t$ is wanted, a value larger than the minimal values
of $u$ may be used, not necessarily minimizing $u+t$. From the stronger
condition (\ref{strong}), for any value of $t
\ge\lceil-\log_2(\rho-\frac12)\rceil+1$, we find that
\begin{equation}\label{eq-u'}
2^{-u'} \le \frac{\rho-\frac12-2^{1-t}}{a-\rho},
\end{equation}
implies $\delta_{kd} \ge 1$ for $k=2^{u'-1}$, and hence
$\Delta(t,u,d,\rho,k)\ge 0$ for all $k>2^{u'-1}$ and $d\le a$. From
(\ref{eq-u'}) we may determine a $u'$ which may be greater than the minimal
$u$ chosen by (\ref{eq-u-cond}).

Thus in the above counterexamples, $t=3$ is the minimal value possible for
$\beta=16$ and $\beta=32$. Then $u'=6$ respectively $u'=7$ are the minimal
values which could be used, and choosing these values we find:
\[\arraycolsep 3pt
\begin{array}{ccccl}
 \beta=16 & u'=6 & d=a=15 & k=2^{u'-1} & \delta_{kd}=1.25\\
 \beta=32 & u'=7 & d=a=31 & k=2^{u'-1} & \delta_{kd}=1.125,
\end{array}
\]
where the test accepts.

\section{The correction to \cite{Kor05a}}
Theorem~3 in \cite{Kor05a} is based on the chance that the weaker condition
(\ref{weak}) is sometimes sufficient to satisfy condition (\ref{sel-cond}),
and thus the smaller value $t=t_0$ can be used. But exhaustive searches for
$u=u_0$ being the minimal solution to (\ref{eq-u-cond}), has shown that
this turns out to be the case in only very few cases. However, it has
turned out that no searches are necessary in the case when the digit set is
maximally redundant, hence we will deal with this case separately below.

It turns out that the test for all $d$ whether $t_0=\hat{t}$ on
$\Delta(\hat{t},u_0,d,\rho,k)$ for a restricted set of radices is only
satisfied for $\beta=4, a=2$, for $\beta=16, a=10$, for $\beta=32, a=25$,
for $\beta=64, a=38,42,44,46,51$, and for $\beta=128, a=81,89,94,105$. In a
few other cases the tests falsely indicated accept. The test on
$\Delta(\hat{t},u_0,d,\rho,k)$ generally fails when $2^{\hat{t}-u_0}$ is very
small, but in no systematic way.  A Maple program for the general
determination of valid parameters is available.

A corrected and reorganized version of Theorem~3 of \cite{Kor05a}, now
identifying further valid parameter pairs $(u,t)$, is then:

\begin{theorem}{\em (SRT digit selection constants)}
  \label{thm}\\
  For $p$-bit radix $\beta$ SRT division for $\beta=2^p, p=2,.., 7$ with
  digit set $D=\{-a,..,a\}$, $\beta/2 \le a < \beta-1$, and
  $\rho=\frac{a}{\beta-1}$, the selection constants
  $\wh{S_d}(\wh{y})=s_{d,k}2^{-t}$ can be determined for $1\le d\le a$ and
  $\wh{y}=k\cdot\ulp(\wh y)$ as
\[
s_{d,k}=\left\lceil2^{t-u}(d-\rho)(k+1)\right\rceil
\]
for $k=2^{u-1},..,2^u-1$, using truncation parameters $t,u$ defined by
$\ulp(\wh{S_d}(\wh{y}))=\ulp(\wh{\beta r}_i) = 2^{-t}$ and
$\ulp(\wh{y})=2^{-u}$, where $u$ has to satisfy\\[-3ex]
\begin{equation}
\label{eq-u}
2^{-u}<\frac{\rho-\frac{1}{2}}{a-\rho}.
\end{equation}

If $u=u_{min}$ is the minimal value satisfying (\ref{eq-u}), then let
$t'$ be the smallest value of $t$ satisfying\\[-2ex]
\begin{equation}
\label{t-cond}
\txt
t > 1-\log_2(\rho-\frac{1}{2}),
\end{equation}
and define $u=u_{max}$ as the smallest value of $u$ satisfying
\begin{equation}\label{u'-cond}
2^{-u} \le \frac{\rho-\frac12-2^{1-t'}}{a-\rho}.
\end{equation}

For any value of $u$, $u_{min} \le u \le u_{max}$ define $\hat{t}$ as the
smallest value of $t$ satisfying
\begin{equation}\txt
2^{-t} \le (\rho-\frac12)-(a-\rho)2^{-u},
\end{equation}
and define from (\ref{sel-cond})\\[-4ex]
\begin{eqnarray*}
\Delta(t,u,d,\rho,k)&=&
\left\lfloor2^{t-u}(d+\rho-1)k\right\rfloor
-\left\lceil 2^{t-u}(d-\rho)(k+1))+1\right\rceil.
\end{eqnarray*} 

Also define the following two checks:
\[simple=
\exists k \in \{2^{u-1}\ldots2^u-1\}\,:\,\Delta(\hat{t},u,a,\rho,k)<0
\]
and
\begin{eqnarray*}
{rest = \exists k \in \{2^{u-1}\ldots2^u-1\} \mbox{ and } \exists d \in
  \{0\ldots a-1\}}\\
: \Delta(\hat{t},u,d,\rho,k)< 0
\end{eqnarray*}

Then
\[
t=\left\{
 \begin{array}{ll}
\hat{t}+1 &{\bf if }\,\,  simple,\\
\hat{t}+1 &{\bf if }\,\, \neg simple \wedge rest,\\
\hat{t} &{\bf otherwise,}
\end{array}
\right.
\]
then $(u,t)$ provides a set of parameters defining a valid digit selection
table.
\end{theorem}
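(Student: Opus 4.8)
\emph{Proof sketch.} The plan is to collapse the statement onto the single inequality $\Delta(t,u,d,\rho,k)\ge 0$, which, after the identities $(d-\rho)k+(d-\rho)=(d-\rho)(k+1)$, $(d-\rho)k+(2\rho-1)k=(d+\rho-1)k$ and $\lceil x+1\rceil=\lceil x\rceil+1$, is exactly (\ref{sel-cond}); this in turn is, by \cite{Kor05a} (it is Equation~(12) there), the condition that the table built from the constants $s_{d,k}=\lceil 2^{t-u}(d-\rho)(k+1)\rceil$ is valid. By the negation symmetry of the digit set it suffices to impose $\Delta\ge 0$ for $0\le d\le a$ and $2^{u-1}\le k\le 2^u-1$, a finite set once $\beta=2^p$ (hence $a$) and $u$ are fixed. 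Along the way I record two facts: with $L=2^{t-u}(d-\rho)(k+1)$ and $R=2^{t-u}(d+\rho-1)k$ one has $R-L=2^{t-u}((2\rho-1)k-(d-\rho))$, which, since $2\rho-1>0$ (as $\rho\ge(\beta/2)/(\beta-1)>\frac12$), is increasing in $k$ and decreasing in $d$, hence minimised over the index set at $k=2^{u-1},\,d=a$, where it equals $2^{t}((\rho-\frac12)-(a-\rho)2^{-u})$; and this value is positive exactly under (\ref{eq-u}), so $\hat t$ is well defined for every $u\ge u_{min}$.

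The engine of the proof is an elementary rounding lemma: if $R-L\ge 2$ then $\Delta(t,u,d,\rho,k)\ge 0$, since $R\ge L+2$ forces $\lfloor R\rfloor\ge\lfloor L\rfloor+2\ge\lceil L\rceil+1=\lceil L+1\rceil$. Combined with the monotonicity above, the stronger condition (\ref{strong}) evaluated at the extremal point $k=2^{u-1},d=a$ --- which is literally $2^{t}((\rho-\frac12)-(a-\rho)2^{-u})\ge 2$ --- already forces $\Delta\ge 0$ for all admissible $d,k$. This disposes of both branches of the case split in which $t=\hat t+1$: by minimality of $\hat t$ we have $2^{-\hat t}\le(\rho-\frac12)-(a-\rho)2^{-u}$, hence $2^{-(\hat t+1)}\le\frac12((\rho-\frac12)-(a-\rho)2^{-u})$, which is precisely (\ref{strong}); so whenever the rule picks $t=\hat t+1$ --- because $simple$ holds, or because $\neg simple\wedge rest$ holds --- the pair $(u,t)$ is valid. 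The same lemma legitimises (\ref{t-cond})--(\ref{u'-cond}): rearranged, (\ref{u'-cond}) is exactly (\ref{strong}) with $t=t'$, and (\ref{t-cond}) is the form (\ref{strong}) takes in the limit $u\to\infty$, so $t=t'$ already works at $u=u_{max}$ and larger $u$ is pointless; $u_{max}$ merely caps the finite search.

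There remains the branch $t=\hat t$, which the rule enters precisely when $\neg simple\wedge\neg rest$; by the very definitions of the two checks this asserts $\Delta(\hat t,u,d,\rho,k)\ge 0$ for every $k\in\{2^{u-1},\ldots,2^u-1\}$ and every $d\in\{0,\ldots,a\}$, which by the first paragraph is verbatim the validity condition, so $(u,\hat t)$ is valid. The step I expect to require genuine care --- and the exact locus of the error in \cite{Kor05a} --- is the exhaustiveness of the test $simple\vee rest$: Theorem~3 of \cite{Kor05a} ran only the $simple$ test (the row $d=a$), on the false premise that among $0<d\le a$ the value of $\Delta$ at $t=\hat t$ is smallest when $d=a$. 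Because $R-L$ is decreasing in $d$ the interval $[L,R]$ is indeed narrowest at $d=a$, but whether $[L,R]$ contains an integer --- which is what governs the sign of $\Delta$ --- is not monotone in $d$, and for $u\gg\hat t$ (where $\delta_{kd}$ barely exceeds its minimum over $k$) the interval can straddle an integer for some $d<a$ while missing it at $d=a$: this is the radix-$16$ and radix-$32$ behaviour exhibited earlier. Restoring the quantifier over $d$ via the $rest$ check repairs this; since the index set is finite the test $simple\vee rest$ is decidable, and running it --- by the Maple procedure mentioned above --- for all $\beta=2^p$, $p=2,\ldots,7$, and all $\beta/2\le a<\beta-1$ confirms the classification stated in the theorem.
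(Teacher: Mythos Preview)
Your proposal is correct and follows essentially the same route as the paper's own proof: validity is reduced to $\Delta\ge 0$ via (\ref{sel-cond}), the $t=\hat t+1$ branches are handled by showing that $\hat t+1$ satisfies the strong condition (\ref{strong}), and the $t=\hat t$ branch is exactly the exhaustive non-negativity check over all $d$ and $k$. Your write-up is considerably more explicit---you spell out the rounding lemma $R-L\ge 2\Rightarrow\Delta\ge 0$, the monotonicity of $R-L$ in $k$ and $d$, and the role of $t'$ and $u_{max}$ as search bounds---where the paper's proof is terse and leans on the surrounding discussion for these points.
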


\begin{proof}
  The expression for $s_{d,k}$ is from (\ref{sel-cond}), and the condition
  (\ref{eq-u}) on $u$ is necessary, from which the minimal value $u_{min}$
  is derived. Comparing (\ref{eq-u}) with (\ref{u'-cond}) it is seen that
  $u_{max}\ge u_{min}$,
   
  The only situations where $t\!=\!\hat{t}$ can be verified, given $\beta$
  and $a$, are when $\Delta(\hat{t},u,d,\rho,k)\ge0$ for all
  $d\in\{1,\cdots,a\}$ and $k\in\{2^{u-1},\cdots,2^u-1\}$, yielding the
  combinations listed. The split cases when $\hat{t}$ must be increased
  covers situations where the test fails for some value of $d$ and $k$, and
  the strong condition (\ref{strong}) must be applied.
\end{proof}
It is only necessary to check if $\Delta(\hat{t},u,d,\rho,k)\ge0$ for
$d\in\{1\cdots,a-1\}$ and all $k$ when $\Delta(\hat{t},u,d,\rho,k)\ge~0$
for all $k$. This is where the original theorem failed by only testing the
latter. But observe that if the simple test turns out false, no further
testing is necessesary. As mentioned above there are only very few
situations where $t=\hat{t}$.

Also note that no solutions are possible for $u<u_{min}$, and if $(u,t)$ is
a valid pair, then $(u+s,t)$ and $(u,t+s)$ for any $s>0$ are also, but
obviously not as good.
\begin{example}
  With the minimally redundant digit set for $\beta=16$, $a=8$, $u_{min}=8$
  and $u_{max}=12$. For the possible choices of $u$ we find:
  \[
\begin{array}{rrc}
   u  & t & u+t \\
 8 & 9 & 17 \\
 9 & 7 & 16 \\
10 & 7 & 17 \\
11 & 7 & 18 \\
12 & 6 & 18 
\end{array}
\]
where $(u,t)=(9,7)$ yields the minimal value of $u+t$.
\end{example}

\begin{theorem}{\em (SRT for maximally redundant digit sets)}
  \label{thm2}\\
  For $\beta=2^p$, $p>2$, with the maximally redundant digit set
  $D=\{-\beta+1\cdots,0,\cdots,\beta-1\}$ there are two sets of parameters
  $(u,t)$ defining valid digit selection tables:
\[
\begin{array}{ccc}
 u & t & u+t\\
 u_{min}=p+1 & p & 2p+1\\
 u_{max}=p+2 & 3 & p+5
\end{array}
\]
For $p=2$, $u=u_{min}=u_{max}$ there is only one set:
\[
\begin{array}{ccc}
 u & t & u+t\\
 3 & 2 & 5.
\end{array}
\]
\end{theorem}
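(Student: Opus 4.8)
The plan is to instantiate the general machinery of Theorem~\ref{thm} for the maximally redundant case $\rho=1$ (i.e.\ $a=\beta-1$) and simply carry out the arithmetic, exploiting the fact that with $\rho=1$ all the logarithmic quantities collapse to powers of two. First I would substitute $\rho=1$ into condition~(\ref{eq-u}): the right-hand side becomes $\frac{1/2}{\beta-1-1}=\frac{1}{2(\beta-2)}$, so the requirement is $2^{-u}<\frac{1}{2(\beta-2)}$, i.e.\ $2^{u}>2(\beta-2)=2^{p+1}-4$. Since $\beta=2^p$, the smallest power of two exceeding $2^{p+1}-4$ is $2^{p+1}$ for $p>2$ (because $2^{p+1}-4$ lies strictly between $2^p$ and $2^{p+1}$), giving $u_{min}=p+1$; for $p=2$ one checks $2(\beta-2)=4$, so $2^u>4$ forces $u_{min}=3=p+1$ as well, but here the inequality is not the binding one — I return to this below.

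Next I would compute $\hat t$ at $u=u_{min}=p+1$. With $\rho=1$ the defining inequality $2^{-t}\le (\rho-\tfrac12)-(a-\rho)2^{-u}=\tfrac12-(\beta-2)2^{-(p+1)}=\tfrac12-(2^p-2)2^{-(p+1)}=2^{-p}$. Hence $\hat t=p$. It then remains to determine, via the \emph{simple} and \emph{rest} checks of Theorem~\ref{thm}, whether $t=\hat t=p$ or $t=\hat t+1=p+1$ for this $u$. Here I expect to show that the \emph{simple} check fails (returns false) — equivalently $\delta_{kd}\ge 1$ already at $k=2^{u-1}=2^p$ for $d=a=\beta-1$ — by computing $\delta_{kd}=2^{\hat t-u}\bigl((2\rho-1)k-(d-\rho)\bigr)-1=2^{p-(p+1)}\bigl(2^p-(2^p-2)\bigr)-1=2^{-1}\cdot 2-1=0$. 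That gives $\delta_{kd}=0<1$, so \emph{simple} does \emph{not} immediately fail for all $k$; one must check whether $\Delta$ ever drops below zero as $k$ runs up to $2^u-1$. This is the crux: I would argue that since the interval width $2^{\hat t-u}(2\rho-1)=2^{-(p+1)}$ grows and the left endpoint drifts by $2^{\hat t-u}(d-\rho)=(2^p-2)2^{-(p+1)}$ per step, over the full range $k\in\{2^p,\dots,2^{p+1}-1\}$ the interval always contains an integer, so $\Delta\ge 0$ throughout and \emph{simple} returns false; then the \emph{rest} check (for $d<a$) must similarly be verified to return false, after which the theorem outputs $t=\hat t=p$, yielding the pair $(p+1,p)$ with $u+t=2p+1$.

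For the second pair I would use the $u_{max}$ branch: take $t'$ the smallest integer with $t>1-\log_2(\rho-\tfrac12)=1-\log_2\tfrac12=2$, so $t'=3$; then $u_{max}$ is the smallest $u$ with $2^{-u}\le\frac{\rho-\tfrac12-2^{1-t'}}{a-\rho}=\frac{\tfrac12-\tfrac14}{\beta-2}=\frac{1}{4(\beta-2)}=\frac{1}{2^{p+2}-16}$, i.e.\ $2^u\ge 2^{p+2}-16$, which for $p>2$ gives $u_{max}=p+2$. At $u=p+2$ one recomputes $\hat t$: $\tfrac12-(\beta-2)2^{-(p+2)}=\tfrac12-(2^p-2)2^{-(p+2)}=\tfrac14+2^{-(p+1)}>\tfrac14$, so $2^{-t}\le$ this quantity forces $\hat t=2$; but by construction of $u_{max}$ from the \emph{strong} condition~(\ref{strong}), we have $\delta_{kd}\ge 1$ at $k=2^{u-1}$ for $d=a$, hence $\Delta\ge 0$ for all $k$ and all $d\le a$ (as argued after~(\ref{eq-u'})), yet the \emph{simple}/\emph{rest} machinery with $\hat t=2$ must output $t=\hat t+1=3$ — I would reconcile this by noting $u_{max}$ was defined precisely so that $t'=3$ works, i.e.\ the strong condition holds at $t=3,u=p+2$ but the weak one with $\hat t=2$ does not, forcing the increment. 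That gives the pair $(p+2,3)$ with $u+t=p+5$. Finally, for $p=2$ I would check directly: $\beta=4$, $a=3$, $\rho=1$; condition~(\ref{eq-u}) gives $2^{-u}<\tfrac14$, but also the $u_{max}$ computation gives $2^u\ge 4(\beta-2)=8$, so $u_{min}=u_{max}=3$, and $\hat t$ at $u=3$ is the smallest $t$ with $2^{-t}\le\tfrac12-2\cdot\tfrac18=\tfrac14$, i.e.\ $\hat t=2$; one verifies the checks yield $t=2$, so the unique pair is $(3,2)$, $u+t=5$.

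The main obstacle I anticipate is the middle step — confirming that at $u=u_{min}=p+1$ the value $\hat t=p$ actually survives (i.e.\ $t=\hat t$, not $\hat t+1$), since $\delta_{kd}=0$ at the left endpoint means this sits exactly on the boundary where Theorem~3 of~\cite{Kor05a} originally erred; one must genuinely trace $\Delta(\,p,\,p+1,\,d,\,1,\,k)$ over the whole range $k\in\{2^p,\dots,2^{p+1}-1\}$ and all $d\le a$ to see it never goes negative, rather than appealing to the strong-condition shortcut. Establishing that this interval-contains-an-integer property holds uniformly — leveraging that with $\rho=1$ the endpoints are dyadic rationals with denominator $2^{p+1}$, so the interval of width $2^{-(p+1)}$ has length exactly one unit in that grid — is the one place where a short but careful divisibility argument is needed; everything else is direct substitution of $\rho=1$, $a=2^p-1$ into the already-established formulas.
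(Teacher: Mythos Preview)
Your overall plan---instantiate the general machinery with $\rho=1$, $a=2^p-1$ and read off $u_{min}$, $u_{max}$, $\hat t$, $t'$---matches the paper's approach, and your computations of $u_{min}=p+1$, $\hat t=p$, and $t'=3$ are correct. But two of the later steps have real problems.

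\textbf{The $(p+2,3)$ case is not established.} Your argument that ``$u_{max}$ was defined precisely so that $t'=3$ works, i.e.\ the strong condition holds at $t=3,u=p+2$ but the weak one with $\hat t=2$ does not, forcing the increment'' is circular. By definition $\hat t=2$ is the smallest $t$ for which the \emph{weak} condition holds, so the weak condition \emph{does} hold at $\hat t=2$; and the failure of the strong condition at $\hat t=2$ does \emph{not} imply that $\Delta<0$ somewhere---that is exactly the gap between conditions~(\ref{strong}) and~(\ref{weak}) the whole discussion is about. To conclude $t=\hat t+1=3$ you must actually exhibit a pair $(d,k)$ with $\Delta(2,p+2,d,1,k)<0$. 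The paper does this explicitly: it takes $d=a=2^p-1$ and $k=2^{p+1}+2$, computes the floor and ceiling, and gets $\Delta=-1$. Without such a witness your argument for this row of the table is incomplete.

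\textbf{The $(p+1,p)$ case has arithmetic errors.} You correctly flag this as the crux, but your drift/width computations are off by the factor $2^{\hat t}$: with $\hat t=p$ and $u=p+1$ one has $2^{\hat t-u}=2^{-1}$, not $2^{-(p+1)}$. Thus the per-step width increase is $2^{\hat t-u}(2\rho-1)=\tfrac12$ and the left-endpoint drift (for $d=a$) is $2^{-1}(2^p-2)=2^{p-1}-1$, not the quantities you wrote; and the ``denominator $2^{p+1}$'' grid picture in your last paragraph then collapses. The paper handles this step more directly: it writes $\Delta(p,p+1,d,1,k)=\lfloor \tfrac12 dk\rfloor-\lceil \tfrac12 dk+\tfrac12(d-k-1)+1\rceil$, observes that $d\le 2^p-1$ and $k\ge 2^p$ force $d-k-1\le -2$, and then checks the two extremal parity cases ($k=2^p$ with $dk$ even; $d=2^p-1$, $k=2^p+1$ with $dk$ odd) to get $\Delta=0$ in both. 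That parity split is the ``short but careful divisibility argument'' you anticipated, but it lives at denominator~$2$, not $2^{p+1}$.

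\textbf{Minor slip.} In computing $u_{max}$ you wrote $4(\beta-2)=2^{p+2}-16$; it is $2^{p+2}-8$. The conclusion $u_{max}=p+2$ for $p>2$ survives, and for $p=2$ this correctly gives $2^u\ge 8$, hence $u_{max}=3=u_{min}$, matching your separate treatment.
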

\begin{proof}
  With $a=\beta-1$, $\rho=1$, for $\beta=2^p$ it follows that $u_{min} =
  \lceil\log_2(2^p-2)+1\rceil = p+1$. Then $t'=3$, from which $2^{u_{max}}
  \ge 2^{p+2}-8$, implying $u_{max}=p+2$ when $p>2$, but for $p=2$
  $u_{max}=3$, which is identical to $u_{min}$.  From
  $2^{-t}\le(\rho-\frac12)-(a-\rho)2^{-u}$ for $u=u_{min}=p+1$ the minimal
  $t$ is $\hat{t}_{min}=p$, and for $u=u_{max}=3$, $\hat{t}_{max}=2$.

  For $(u,t)=(u_{min},\hat{t}_{min})=(p+1,p)$:
\begin{eqnarray*}
\Delta(\hat{t}_{min},u_{min},d,1,k)
&\hspace{-15ex}=\left\lfloor2^{-1}d\,k\right\rfloor
-\left\lceil 2^{-1}dk+2^{-1}(d-k-1)+1\right\rceil\\
&\ge \left\lfloor2^{-1}d'k'\right\rfloor
-\left\lceil 2^{-1}d'k+2^{-1}((2^p-1)-(2^p+1)-1)+1\right\rceil\\
&\hspace{-25ex}=\left\lfloor2^{-1}d'k'\right\rfloor
\txt-\left\lceil 2^{-1}d'k'-\frac12\right\rceil = 0,
\end{eqnarray*} 
when substituting $d$ by its maximal value $d'=2^p-1$ and $k$ by its
(almost) minimal value $k'=2^{u_{min}-1}+1=2^p+1$, using that with these
extreme values $d'k'$ is odd. Substituting with $k"=2^{u_{min}-1}=2^p$ then
$d'k"$ is even and the lower bound is $\lfloor 2^{-1}d'k"\rfloor -\lceil
2^{-1}d'k"\rceil = 0 $. 

Hence $(u,t)=(u_{min},\hat{t}_{min})=(p+1,p)$
provides a correct table, which also covers the case for $p=2$ with
$(u,t)=(3,2)$.

For $(u,t)=(u_{max},\hat{t}_{max})=(p+2,2)$ we will now show that there is a
value $k=2^{u_{max}-1}+2=2^{p+1}+2$ such that $\Delta(t,u,a,\rho,k)<0$ for
$d=a=2^p-1$.
Let $K=2^{\hat{t}_{max}-u_{max}}a\,k=2(2^p-2^{-p})$ then
\begin{eqnarray*}
\Delta(\hat{t}_{max},u_{max},a,1,k)
&&=\left\lfloor K\right\rfloor
-\left\lceil K+2^{-p}(a-k-1)+1\right\rceil\\
&&=\left\lfloor K\right\rfloor
-\left\lceil K+2^{-p}(2^p-1-2(1+2^p)+1)+1\right\rceil\\
&&=\left\lfloor K\right\rfloor
-\left\lceil K -2^{-p}\right\rceil\,=\,(2^{p+1}-1)-2^{p+1}\,=-1.
\end{eqnarray*} 
Thus $\hat{t}_{max}$ must be incremented and $(u,t)=(p+2,3)$ provides a
correct table.
\end{proof}

\begin{example}
  For the maximally redundant digit set with $\beta=16$, $a=15$, the
  minimal value of $u$ is $u_{min}=5$, for which $t'=\hat{t}=4$ is
  determined. However, a smaller value of $t$, $t"=3$ is also possible, for
  which $u_{max}=6$. Note that $u+t$ is the same for the two combinations,
  hence they require the same table sizes, but when $t$ is smaller, fewer
  bits of the redundant partial remainder need to be converted.
\end{example}

\section{Conclusions}
It is likely that the error had not been noticed because it was implicitly
assumed that minimal tables are wanted, as obtained by choosing minimal or
almost minimal values of $u$, and thus for small values of $u+t$ and small
values of $u-t$. Fortunately, the exposure of the error has prompted a
further analysis of the problem of determining additional value pairs
$(u,t)$, providing valid digit selection tables. The result on truncation
parameters $u,t$ for the more general case has been significantly
strengthened. A new theorem is presented, simplifying the parameter
determination in the important case when the digit set is maximally
redundant, eliminating all searching.

As pointed out by Russinoff, the error has gone unnoticed in the review
process, and subsequently by other referencing the paper. However it is
standard scientific knowledge, that any research result has to prove its
correctness through the ``time test'', i.e. that it can stand uncontested
through time. It is appreciated that his objections identified the problem,
and made it possible in this case to provide a correction of the presented
results.

He further contests the use of ``informal quasi-mathematical arguments'' as
opposed to a ``formal machine-checked proof'', such as he has applied to
his proofs in \cite{Rus13}, employing an ACL2 proof script which consists
of more than 800 lemmas, an impressive effort.

His approach is the same as in publications before \cite{Kor05a} to
determine parameters for providing valid digit selection tables: proving
the validity of some pair $(u,t)$ by checking all table entries, but
limited to maximally redundant digit sets. The attempt in \cite{Kor05a} was
to determine these parameters directly, based on the radix and any
corresponding valid digit set, and in this revision has been significantly
strengthened.

{
\bibliographystyle{IEEEtran} 


}

\end{document}